   \theoremstyle{plain}
   \newtheorem{theorem}{Theorem}[section]   
   \newtheorem{claim}[theorem]{Claim}
   \newtheorem{lemma}[theorem]{Lemma}
\begin{document}
\title{Binary weights spanning trees and the $k$-red spanning tree problem in linear time}
\author{
Dorit S. Hochbaum \thanks{Department of Industrial Engineering and Operations Research,
University of California, Berkeley, email: {\tt dhochbaum@berkeley.edu}. Research supported in part by AI institute NSF award 2112533. }
}
\date{}
\maketitle

\begin{abstract}
We address here spanning tree problems on a graph with binary edge weights.  For a general weighted graph the minimum spanning tree is solved in super-linear running time, even when the edges of the graph are pre-sorted.  A related problem, of finding a spanning tree with a pre-specified sum of weights, is NP-hard.  In contrast, for a graph with binary weights associated with the edges, it is shown that the minimum spanning tree and finding a spanning tree with a given total sum, are solvable in linear time with simple algorithms.
\end{abstract}

\section{Introduction}
We address here binary weights spanning tree problems.   A given undirected graph has binary weights associated with the edges.  One of the problems is to find a minimum or maximum weight spanning tree.  For general weights, there are a number of minimum (or maximum) spanning tree algorithms, all of which have super-linear running times, even when the edges of the graph are pre-sorted.  On the other hand, when all the edge weights are equal, to $1$, then any search tree in the connected graph, a DFS or BFS trees, are found in linear time in the number of edges.  We show here, that the binary weights spanning tree, as well as the $k$-red spanning tree are solved in the same complexity as a spanning trees in unweighted graphs.

The study here is motivated by a simple problem I have recently included in a take-home final exam.
Given a connected graph $G=(V,E)$ and a partition of $E$ into $E_r \cup E_b$ where $E_r$ are red edges and $E_b$ are blue edges. Let $|V|=n$, $|E|=m$, $|E_r|=m_r$, and $|E_b|=m_b$. The $k$-red spanning tree problem has two versions: One is to determine whether there exists a spanning tree in $G$ that has exactly $k$ red edges; the second version is to construct a $k$-red spanning tree if one exists.

It appears that various writings, found online, on the $k$-red spanning tree problem confuse these two problems.  Further, the typically offered solution, described below, requires the generation of minimum spanning trees for existence, and applying exchange operations to generate the $k$-red spanning tree. Without sophisticated data structures, the exchange operations require quadratic time $O(n^2)$ and even with advanced data structures the known approaches require more than linear time.

It is shown here that the existence and construction of $k$-red spanning tree are both solved in linear time, as well as the generation of minimum (or maximum) spanning tree.


\section{An approach for finding a $k$-red tree and why it is wasteful}
The standard approach for verifying existence is to assign the red and blue edges the weights of $0$ and $1$, respectively.  Then find a minimum spanning tree  in $G$, $T_{\min}$, and a maximum spanning tree in $G$, $T_{\max}$.  A $k$-red spanning tree exists if and only if the $T_{\min}$ contains at least $k$ red edges, or its weight is less than or equal to $n-1-k$, and $T_{\max}$ contains at least $n-1-k$ blue edges, or its weight is at least $n-1-k$.

The proof relies on the use of the {\em exchange} operation.  An {\em exchange} is an operation on a spanning tree $T= (V,E_T)$, which adds an edge $[i,j] \notin E_T$, and removes an edge $[p,q]\in E_T$ from the unique path $P_{ij}$ that connects $i$ to $j$ in $T$.

\begin{claim}
A $k$-red spanning tree exists if and only if $|T_{\min}\cap E_r| \geq k$, and $|T_{\max}\cap E_b| \geq n-1-k$
\end{claim}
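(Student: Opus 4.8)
The plan is to reformulate the two inequalities as a single sandwich condition on the number of red edges in a spanning tree and then prove the two directions separately, the forward one being immediate and the reverse one resting on an exchange-sequence argument combined with a discrete intermediate-value step.

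First I would recall that red edges carry weight $0$ and blue edges weight $1$, so that a spanning tree of weight $w$ has exactly $n-1-w$ red edges; consequently $T_{\min}$ is precisely a spanning tree with the maximum possible number of red edges, say $r_{\max}:=|T_{\min}\cap E_r|$, and $T_{\max}$ is a spanning tree with the minimum possible number of red edges, $r_{\min}:=|T_{\max}\cap E_r|=(n-1)-|T_{\max}\cap E_b|$. Under this notation the hypothesis ``$|T_{\min}\cap E_r|\ge k$ and $|T_{\max}\cap E_b|\ge n-1-k$'' is exactly the statement $r_{\min}\le k\le r_{\max}$. The forward implication is then trivial: if $T$ is a spanning tree with exactly $k$ red edges, then by the extremal definitions of $r_{\min}$ and $r_{\max}$ we must have $r_{\min}\le|T\cap E_r|=k\le r_{\max}$, which is the required pair of inequalities.

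For the reverse implication I would establish an exchange-sequence lemma: for any two spanning trees $T_1,T_2$ of $G$ there is a sequence $T_1=S_0,S_1,\dots,S_t=T_2$ of spanning trees in which each $S_{i+1}$ is obtained from $S_i$ by a single exchange. The proof is the standard greedy argument for the graphic matroid: whenever $S_i\ne T_2$, pick an edge $f=[i',j']\in T_2\setminus S_i$; adding $f$ to $S_i$ closes a unique cycle, and since $T_2$ is acyclic that cycle contains some edge $e\in S_i\setminus T_2$ lying on the path $P_{i'j'}$ joining the endpoints of $f$ in $S_i$, so $S_{i+1}:=S_i+f-e$ is again a spanning tree with $|S_{i+1}\cap T_2|=|S_i\cap T_2|+1$, which forces the process to terminate at $T_2$ after $t=|T_2\setminus T_1|$ steps. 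The crucial observation is that each such exchange swaps exactly one edge, so $|S_{i+1}\cap E_r|$ and $|S_i\cap E_r|$ differ by at most $1$.

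Finally I would apply this lemma with $T_1=T_{\min}$ and $T_2=T_{\max}$. Along the resulting sequence the quantity $|S_i\cap E_r|$ starts at $r_{\max}\ge k$, ends at $r_{\min}\le k$, and changes by at most $1$ at each step; by the discrete intermediate value principle some $S_i$ satisfies $|S_i\cap E_r|=k$, and that $S_i$ is the desired $k$-red spanning tree. I expect the main obstacle to be proving the exchange-sequence lemma cleanly, in particular arguing that the quantity $|S_i\cap T_2|$ strictly increases so that the sequence genuinely reaches $T_2$ and hence drags the red-count monotonically reachable from $r_{\max}$ down to $r_{\min}$ one unit at a time; once that is in place, the discrete intermediate-value conclusion and the easy direction are routine.
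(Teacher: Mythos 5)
Your proof is correct and follows essentially the same route as the paper: both directions rest on the observation that weight $w$ corresponds to $n-1-w$ red edges, and the sufficiency argument transforms one extremal tree into the other by single-edge exchanges on the symmetric difference, so that the red-edge count changes by at most one per step and a discrete intermediate-value argument produces the $k$-red tree. Your writeup is in fact a cleaner and more carefully stated version of the paper's exchange argument (the paper's phrasing of which tree gains an edge in common with which, and which tree has ``more than $k$'' red edges, is somewhat garbled), but the underlying idea is identical.
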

\begin{proof}
The conditions on $T_{\min}$ and $T_{\max}$ are obviously necessary: If a $k$-red spanning tree exists then its weight is $n-1-k$.  Therefore $T_{\min}$ will include at least $k$ red edges and its weight can be only less than $n-1-k$. Similarly, the weight of $T_{\max}$ is at least the weight of $n-1-k$.   To show that the conditions are sufficient we use the exchange operations on the symmetric difference between $T_{\min}$ and $T_{\max}$. If the symmetric difference is empty then $T_{\max}$ contains exactly $k$ red edges, and we are done.  Otherwise consider a red edge $e$ in $T_{\min}\setminus T_{\max}$ and exchange it for an edge $e'$ in $T_{\max}\setminus T_{\min}$.  Such an edge $e'$ must exist on the unique cycle created by adding $e$ to $T_{\max}$ since otherwise $T_{\min}$ would have included a cycle contradicting its acyclic property.   Therefore each exchange operation creates a tree $T$ with one more edge in common with $T_{\max}$ as compared with the previous one.  Therefore the number of red edges in the created tree $T$ increases by $0$ or $1$.  Since $T_{\max}$ contains strictly more than $k$ red edges, after a number of exchanges, that does not exceed $|T_{\max}\setminus T_{\min}|$ exchanges, the number of red edges in the created tree is exactly $k$.
\end{proof}

Notice that this proof is constructive even though the conditions are not.  That is, it uses the exchange operations to construct the $k$-red spanning tree in fewer than $n$ exchanges.  We also note that the best known running time to construct the trees $T_{\min}$ and $T_{\max}$ is $O(m \alpha (n,m))$ where $\alpha (n,m))$ is the very slowly growing inverse-Ackerman function, \cite{Chazelle2000}.  With these conditions the running time for verifying existence is $O(m \alpha (n,m))$.  The implied construction however, with $O(n)$ exchange operations, require, with a straightforward algorithm, $O(n^2)$ steps.

We show next that existence can be verified and the $k$-red tree constructed, more efficiently and simply, in $O(m)$ time.

\section{Existence and construction of a $k$-red tree in linear time}
We now state different necessary and sufficient conditions that are easily implemented in linear time and do not require the construction of trees.  The key is that to prove existence, it suffices to find all the connected red components and then to find the connected blue components within the red components.  The red components are connected in the red edges induced graph $G_r=(V,E_r)$.  Finding those components is done in linear time $O(m_r)$ using BFS (Breadth First Search), and outputs also the BFS spanning trees of all the red components.  Let these, say $p$, red connected components induce a partition of $V$ into the sets of nodes $V_1, \ldots ,V_p$.  Note that some of these components can be singletons and therefore contain no edges. \\
{\bf Condition 1:} $\sum _{i=1}^p |V_i| \geq k +p$.  This is required as the sum of the number of red edges in the spanning trees in all these components has to be at least $k$.  Putting it differently, $q=\sum _{i=1}^p \{|V_i|-1\} \geq k$.\\
{\bf Condition 2:} This condition verifies that there are enough blue edges in those red components to replace the surplus number of red edges.  This number of blue edges has to be at least $q-k$. We next find the connected blue components of blue edges inside each of the graph's red components induced by the sets of nodes of each red component $G_i= (V_i, E_b)$, $i=1,\ldots ,p$.  Let the blue connected components in $G_i$ induce the disjoint sets of nodes $V_{i,1},\ldots , V_{i,p(i)}$ all of which are contained in $V_i$, for $i=1,\ldots ,p$. Let $m_i(b)= \sum _{j=1}^{p(i)} \{ |V_{i,j}|-1\}$ be the number of blue edges in a largest spanning forest of $G_i$.  Condition 2 is the inequality $\sum _{i=1}^p m_i(b) \geq q-k$.

\begin{lemma}
There is a $k$-red spanning tree in $G$ if and only if conditions 1 and 2 are satisfied; or equivalently:  $\sum _{i=1}^p m_i(b) \geq q-k \geq 0$.
\end{lemma}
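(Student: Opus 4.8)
The plan is to prove the two implications separately. For the ``only if'' direction, suppose a $k$-red spanning tree $T$ exists; its $k$ red edges form a forest in $G_r$, so $k$ cannot exceed the size $q$ of a spanning forest of $G_r$, which is Condition~1 ($q-k\ge 0$), and Condition~2 is then the counting already motivated above — among the $n-1-k$ blue edges of $T$ at least $q-k$ are forced to sit inside the red components $V_1,\dots,V_p$, while each $G_i=(V_i,E_b)$ can contribute at most $m_i(b)$ edges to a forest. So the ``only if'' direction needs nothing beyond the discussion preceding the lemma, and the two conditions together are just the restatement $\sum_i m_i(b)\ge q-k\ge 0$. The substance of the statement is the converse: from $0\le q-k\le\sum_i m_i(b)$ one must actually build a $k$-red spanning tree, and within the linear-time budget.

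For the construction I would reuse exactly the data the two BFS passes already produce: a spanning tree $T_i^{r}$ of each red component $V_i$ with $|V_i|-1$ red edges, and inside each $V_i$ a spanning forest $F_i^{b}$ of $G_i$ with $m_i(b)$ blue edges whose components are the blue pieces $V_{i,1},\dots,V_{i,p(i)}$. First pick integers $0\le s_i\le m_i(b)$ with $\sum_i s_i=q-k$; this is possible precisely because $0\le q-k\le\sum_i m_i(b)$, via one greedy sweep over $i$. Inside each $V_i$, build a spanning tree $\widehat T_i$ of $V_i$ out of any $s_i$ edges of $F_i^{b}$ that form a forest, completed by $(|V_i|-1)-s_i$ edges of $T_i^{r}$; such a completion exists because $T_i^{r}$ alone already spans $V_i$. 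Then $\bigcup_i\widehat T_i$ is a spanning forest of $G$ with components $V_1,\dots,V_p$ and with $\sum_i\big[(|V_i|-1)-s_i\big]=q-(q-k)=k$ red edges. Finally, connect the $p$ pieces: contracting each $V_i$ leaves a connected graph (because $G$ is connected), so $p-1$ further edges of $G$, each joining two distinct red components, extend $\bigcup_i\widehat T_i$ to a spanning tree of $G$; every such connecting edge is blue, since a red edge between two red components would merge them in $G_r$. The resulting tree has $k$ red edges and $(q-k)+(p-1)=n-1-k$ blue edges, i.e.\ it is a $k$-red spanning tree.

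The step that needs care is the per-component construction of $\widehat T_i$: one must be certain that for every admissible value of $s_i$ a spanning tree of $V_i$ with $s_i$ blue and $(|V_i|-1)-s_i$ red edges exists, and that it can be produced in $O(|V_i|)$ time. The clean way is to work in the multigraph on $V_i$ formed by the red edges of $T_i^{r}$ together with the blue edges of $F_i^{b}$: its blue edges have rank $m_i(b)$, so any blue forest of size $s_i\le m_i(b)$ extends greedily — using a union--find over the vertices of $V_i$ and exploiting that $T_i^{r}$ is already spanning — to a spanning tree by adding exactly $(|V_i|-1)-s_i$ red edges. Summed over $i$ this is $O(m)$, and the remaining work (choosing the $s_i$, taking the union, and locating the $p-1$ cross edges in one scan) is routine bookkeeping, also $O(m)$, which is how the lemma delivers the promised linear-time algorithm.
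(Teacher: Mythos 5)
Your proposal is correct and follows essentially the same route as the paper: the paper's own construction (in the paragraphs after the lemma) likewise distributes the surplus $q-k$ greedily over the blue forests of the red components, extends each chosen blue forest to a component spanning tree with red BFS edges, and joins the $p$ components with $p-1$ blue cross edges. The one place you go beyond the paper is the necessity of Condition~2, which the paper merely asserts; your counting argument (exactly $q$ tree edges lie inside the red components, of which the $q-k$ blue ones form a forest in each $G_i$ and hence number at most $\sum_i m_i(b)$) supplies the justification cleanly.
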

\begin{proof}
Firstly we note that the first inequality is condition 2 and the second inequality is condition 1.  Condition 1 follows since $q-k \geq 0$ is necessary to have at least $k$ red edges in a spanning tree. Condition 2 is correct by construction -  applying the exchange algorithm on $q-k$ blue edges of the spanning forest in the red connected components. If not satisfied, then any spanning tree will have strictly more than $k$ red edges.

\end{proof}

Testing condition 1 runs in $O(m_r)$ steps, and testing condition 2 runs in  $O(m_b)$ steps for a total of $O(m)$ complexity.  Therefore the existence of a $k$-red spanning tree is verified in linear time, $O(m)$.  We next show how to construct a $k$-red tree, if one exists, in linear time.

Consider red components $V_1, \ldots ,V_p$.  Shrinking each component of the $p$ components into a node and finding the blue edges spanning tree that spans those $p$ components, $T(b)$, of size $p-1$, can be done in linear time $O(m_b)$.  Finding the $p(i)$ blue connected components in $G_i$ that induce the disjoint sets of nodes $V_{i,1},\ldots , V_{i,p(i)}$ for all values of $i=1,\ldots ,p$, require a total of $O(m_b)$ steps.  In each such blue connected component on $V_{i,j}$, $j=1,\ldots , p(i)$, we have a spanning tree of blue edges. Hence in each red connected component there is a blue forest.

If $\sum _{i=1}^p |V_i| = k +p$ then the red BFS tree in each component plus the tree  $T(b)$ that link the red components form a $k$-red spanning tree, and we are done.  Suppose then that $\sum _{i=1}^p |V_i| > k +p$ so that $\sum _{i=1}^p |V_i| -\ell = k +p$ for $\ell \geq 1$ integer.  $\ell$ is the number of edges that must be removed from the union of red trees spanning the red components in order to leave exactly $k$ red edges.  Consider the smallest value of $t$ such that $\sum _{i=1}^t m_i(b) \geq \ell$.  In each of the first $t-1$ components consider the blue spanning forest on $m_i(b)$ blue edges and extend it, using BFS on the red edges, to a spanning tree of the component $i$, $i=1,\ldots ,t-1$.  In the component $G_t$ take any $\ell-\sum _{i=1}^{t-1} m_i(b)$ blue edges of the blue forest and extend them with red edges, using BFS, to a spanning tree of $G_t$.  The resulting spanning trees of all the red components with the tree $T(b)$ is a spanning tree in $G$ with exactly $k$ red edges.  Therefore this construction is accomplished in linear time $O(m)$.

\noindent
{\bf Data structure}\\
The input graph $G$ is provided with adjacency lists, where for each node $i$ we have:\\
red adjacency list (i); red-visited$(i)$; blue adjacency list (i); blue-visited$(i)$\\
The red adjacency list (i) are the nodes adjacent to $i$ along red edges. Similarly, blue adjacency list (i) are the nodes adjacent to $i$ along blue edges.  The flags: red-visited$(i)$ is {\sf nil} if node $i$ has not been visited as of yet in the BFS of the red nodes, or, if has been visited it is the index of the index $r(i)$ of the root node of the component BFS that has reached node $i$.

In addition to this list, there is a list of the red components, generated when they are found, each identified by the root node.  The component is a list of nodes in the red component.  Similarly, there is a list of blue components.

\noindent
{\bf Node shrinking.}\\
Here, when the red components are identified, they are shrunk in order to find the blue edges spanning tree that span the rest of the graph.  Shrinking a component does not require any modification in the data structure.  In general a shrunk collection of nodes means their adjacency lists are merged.  Here it is not needed.  Simply, when searching for a blue neighbor of a shrunk component, the blue adjacency lists of the respective nodes in the components are scanned one at a time.  It is possible that the lists overlap, in which case the adjacent node has already been identified as visited.  In any case, the scan for an adjacent blue node require at most a single pass of $E_b$.

\section{Discussion and extensions}
The $k$-red spanning tree can be viewed as a problem of seeking a spanning tree with a given sum of weights.  The result here demonstrates that this problem is polynomial time solvable for a graph with binary weights.  Consider then a graph with general weights.  The decision of whether a spanning tree of total sum of weights equal to $M$ is equivalent to deciding whether there exists a subset of $n-1$ weights, out of a list of $m$ weights, that adds up to $M$.  This is at least as hard as subset sum, and therefore at least weakly NP-hard.  It is however conjectured here to be strongly NP-hard.  Even for the case of a graph with edge weights in $\{ 0,1,2 \}$, we believe that finding a spanning tree of a given sum $M <2(n-1)$, is NP-hard.  The binary case discussed here is therefore significantly easier than these other weighted spanning tree problems.


\end{document}